\newcommand{\N}{\mathbb{N}}
\newcommand{\C}{\mathbb{C}}
\newcommand{\R}{\mathbb{R}}
\newcommand{\half}{\mathbb{H}}
\newcommand{\strip}{{S_\pi}}
\newcommand{\E}{\mathbb{E}}
\newcommand{\oo}{o}
\newcommand{\de}{\mathrm{d}}
\renewcommand{\Im}{\mathrm{Im}}
\theoremstyle{plain}
\newtheorem{theorem}{Theorem}
\theoremstyle{definition}
\title{Stationarity of SLE}
\author{Antti Kemppainen\thanks{
Email: \protect\url{antti.h.kemppainen@helsinki.fi}.
Department of Mathematics and Statistics, University of Helsinki.}}
\date{}
\begin{document}

\maketitle

\begin{abstract}
A new method to study a stopped hull of SLE$_\kappa(\rho)$ is presented. In this approach, the law of the conformal map associated to the hull
is invariant under a SLE induced flow. The full trace of a chordal SLE$_\kappa$ can be studied using this approach.
Some example calculations are presented.
\end{abstract}

\section{Introduction}

Schramm--Loewner evolution (SLE) was introduced by Oded Schramm \cite{schramm-2000-}. SLEs are random curves in
the plane. There are many variants of SLE, but the local properties of the random curve are determined by a single parameter
$\kappa \geq 0$. SLEs are characterized by conformal invariance and the domain Markov property. The scaling limits of
two-dimensional statistical physics models at criticality are believed to be conformally invariant. For this reason the scaling
limit of a curve emerging from such a model has to be SLE$_\kappa$ for some $\kappa \geq 0$. The parameter $\kappa$ describes the
universality class of the model.

A chordal SLE is a random curve in a simply connected domain connecting two boundary points. 
In the section~\ref{sec: sle}, we will define the chordal SLE in more detail.
The chordal SLE is \emph{stationary} in the sense that given the process up to a time $t$ the law of $K_{t + s}$
is such that $g_t(K_{t + s} \setminus K_t) - X_t$ and $K_s$ have the same law, where 
$(g_t)_{t \geq 0}$ is the collection of conformal mappings satisfying the Loewner equation, 
$(K_t)_{t \geq 0}$ denotes the corresponding collection of subsets of the upper half-plane $\half$,
and $(X_t)_{t \geq 0}$ is the driving process.

SLE$_\kappa( \rho )$-processes, $\kappa \geq 0$ and $\rho \in \R$, are generalizations of the chordal SLE$_\kappa$.
When $\rho=0$ this reduces to the chordal case: SLE$_\kappa( 0)$ is the chordal SLE$_\kappa$. 
The definition of SLE$_\kappa( \rho )$ requires two marked points.
If $X_t$ is the driving process of a SLE$_\kappa( \rho )$ and the other marked point is $Y_t$, 
then for a range of the parameter values the hitting time
$\tau = \inf \{ t \geq 0 : |Y_s - X_s| \to 0 \textrm{ as } s \nearrow t \}$
is almost surely finite. The stopped hull $K_\tau$ is a interesting object in many ways. For example,  SLE$_\kappa(\kappa -6)$
is a coordinate transformation of the chordal SLE$_\kappa$ and hence $K_\tau$ describes the full SLE$_\kappa$ trace
seen from a fixed point in the real axis.

The novel result of this paper is a formulation of the \emph{stationarity} of SLE$_\kappa( \rho )$ in Theorem~\ref{thm: stationarity} 
so that $K_\tau$ is invariant under a flow induced the SLE. 
In this approach, the SLE is run for a time $t>0$, then this beginning is erased, and scaling and translation are used to map
the beginning and end points $X_t$ and $Y_t$ back to the initial values $X_0$ and $Y_0$. 
By the property stated in Theorem~\ref{thm: stationarity}, 
$(g_t( K_\tau \setminus K_t) - \beta_t) /\alpha_t$ has the same law as $K_\tau$, where $\alpha_t$ and $\beta_t$ are the
appropriate scaling and translation factors.

Theorem~\ref{thm: stationarity} enables us to calculate quantities related to $K_\tau$
such as the moments $\E[ \prod_{j=1}^n a_{k_j} ]$ of the coefficient of the expansion $G(z) = g_\tau(z) = z + \sum_j a_j z^{-j}$.
The driving function and the coefficients of the Loewner map can be viewed as the ``state of SLE''
and they form the SLE data.
The stationarity gives a new way to calculate the distribution functions or the expected values of the SLE data.
This is related to the approach in \cite{kytola-kemppainen-2006-},
although the work of this paper was done before that paper.

In the sections~\ref{ssec: reversibility} and \ref{ssec: general moments}, 
an approach for the reversibility of the chordal SLE  is proposed, 
and for $\rho = \kappa -6$, the general form of $\E[ \prod_{j=1}^n a_{k_j} ]$ as a function of $\kappa$ is derived using the reversibility.
The reversibility was recently proven to hold for chordal SLE$_\kappa$, $\kappa \in [0,4]$ by Dapeng Zhan \cite{zhan-2008-}.
It is a property of SLE
that states that if the roles of the beginning and end points are changed, then the law of the
random curve remains the same.

In the section~\ref{ssec: a1 a2 moments}, moments of the form $\E[ a_1^n ]$ and $\E[ a_1^n a_2^m ]$ are calculated.
In the section~\ref{ssec: a1 density}, the method is used to derive the distribution of $a_1$.

\section{SLE and Schramm's principle} \label{sec: sle}

\subsection{Chordal SLE}

One natural choice for a simply connected domain in the complex plane having two marked boundary points is
the upper half-plane $\half = \{ z \in \C \,:\, \Im(z) > 0 \}$. The marked points are $0$ and $\infty$.
The triplet $(\half,0,\infty)$ is preserved by the family of mappings $z \mapsto \lambda z, \lambda>0$.
The Schwarz lemma shows that these are the only conformal mappings with this property.

A subset $K \subset \half$ is a \emph{hull} if $K = \half \cap \overline{K}$, $K$ is bounded and $\half \setminus K$
is simply connected. If $\gamma:[0,T] \to \C$ is a simple curve such that $\gamma(0) \in \R$ and $\gamma(0,T] \subset \half$,
then $K_t = \gamma(0,t]$ is a hull for each $t \in [0,T]$. In this case the family $(K_t)_{t \in [0,T]}$ is growing in the sense
that $K_t \subsetneq K_s$ when $0 \leq t < s \leq T$.

Let $(K_t)_{t \geq 0}$ be a growing family of hulls and $g_t$ be the conformal mapping from $\half \setminus K_t$ 
onto $\half$  that is normalized by $g_t(z) = z + \oo(1)$ as $z \to \infty$. This normalization makes $g_t$ unique.
If $K_0 = \emptyset$ and $(K_t)_{t \geq 0}$ grows continuously in a quite natural sense,
we can reparameterize $K_t$ so that $g_t (z) = z + 2t/z + \ldots$ at infinity. 

If $(K_t)_{t \geq 0}$ grows locally in the sense of Theorem~2.6 of \cite{lawler-schramm-werner-2001-}
then the family of mappings $(g_t)_{t \geq 0}$ satisfies the upper half-plane Loewner equation
\begin{equation} \label{eq: loewner}
\partial_t g_t(z) = \frac{2}{g_t(z) -X_t} 
\end{equation}
where $X_t \in \R$ is called the driving function (process) of $K_t$. In fact $X_t$ is the image
of the point where $K_t$ is growing under the mapping $g_t$, that is $X_t =\cap_{s > t} \overline{ g_t(K_s \setminus K_t) }$.
Note that the family of hulls given by a simple curve is growing locally.

Consider now a collection of probability measures $(\mu_{\Omega,a,b})$ such that
$\mu_{\Omega,a,b}$ is the law of a random curve in $\overline{\Omega}$ connecting two boundary points
$a$ and $b$ of a simply connected domain $\Omega$. Choose some consistent parameterization for such curves
so that they are parametrized by $t \in [0,\infty)$.
Now we use \emph{Schramm's principle} 
(which appeared in the seminal paper \cite{schramm-2000-} by Schramm, see e.g. the discussion about LERW in the
introduction of that paper. It is formulated in the following way in \cite{smirnov-2006-}.)
and we demand that $(\mu_{\Omega,a,b})$ satisfies the following two requirements:
\begin{description}
\item[(CI) Conformal invariance:] For any triplet $(\Omega,a,b)$ and any conformal mapping $\phi: \Omega \to \C$,
it holds that $\phi \mu_{\Omega,a,b} = \mu_{\phi (\Omega), \phi(a), \phi(b)}$.
\item[(DMP) Domain Markov property:] 
Suppose we are given $\gamma [0,t]$, $t>0$.
The conditional law of $\gamma(t+s)$ given $\gamma[0,t]$ 
is the same as the law of $\gamma(s)$ in the slit domain $(\Omega \setminus \gamma[0,t],\gamma(t),b)$. That is
\begin{equation*}
\mu_{\Omega,a,b} ( \;\cdot\; | \,\gamma[0,t] ) = \mu_{\Omega \setminus \gamma[0,t],\gamma(t),b} 
\end{equation*}
\end{description}
First of all CI tells that $\mu_{\Omega,a,b} = \phi \mu_{\half,0,\infty}$, where $\phi$ is a conformal mapping from
the triplet $(\half,0,\infty)$ to the triplet $(\Omega,a,b)$. Note that $\phi$ is not unique: any
$\phi(\lambda \,\cdot\,), \lambda>0$ would also do. So for each $(\Omega,a,b)$ choose some $\Phi=\phi$.

Now we can restrict to the standard triplet $(\half,0,\infty)$. Let $H_t$ be the unbounded component
of $\half \setminus \gamma[0,t]$, $K_t$ the complement of $H_t$ in $\half$ and $g_t$ the mapping associated with $K_t$.
The combination of CI and DMP shows that
the curve $\tilde \gamma : s \mapsto g_t \big(\gamma(t + s)\big)  - X_t$ is independent of $\gamma[0,t]$ and is identically
distributed to $\gamma$. This leads to the fact that $X_t$ has independent and stationary increments. 
Since $K_t$, defined by a curve, is growing locally, it has a continuous driving process. 
All the continuous processes with independent and stationary increments are of the form
\begin{displaymath}
X_t = \sqrt{\kappa} B_t + \theta t,
\end{displaymath}
with some constants $\kappa\geq 0$ and $\theta \in \R$. Here $B_t$ is a standard one-dimensional Brownian motion. Let $\phi_\lambda : z 
\mapsto \lambda z$. CI with $\phi=\phi_\lambda$ implies that $X_t$ and $\lambda X_{t/\lambda^2}$ have the same law. This
shows that $\theta=0$ and furthermore that the law of the random curve in $(\Omega,a,b)$ doesn't depend
on the choice of $\Phi$.

Chordal SLE$_\kappa$ is the law of $K_t$ with the driving process $X_t = \sqrt{\kappa} B_t$.
It turns out that $K_t$ is generated by a curve in the sense that there is a curve $\gamma$ so that
$\half \setminus K_t$ is the unbounded component
of $\half \setminus \gamma[0,t]$, see \cite{rohde-schramm-2005-}. Such $\gamma$ is called the \emph{trace}.
For $\kappa \in (0,4]$ it is a simple curve.

\subsection{Strip SLE and the upper-half plane SLE$_\kappa (\rho)$}

It is possible to repeat Schramm's principle for three marked boundary points.
A natural domain for three marked points is the infinite strip $\strip = \{z \in \C \,:\, 0<\Im (z) < \pi\}$. The marked points are now
$0, -\infty$ and $+\infty$.

We can continue in the same way as in the case of the upper half-plane. For a family of hulls $(K_t)_{t \geq 0}$
on the strip $\strip$, let $g^\strip_t$ be a conformal mapping from $\strip \setminus K_t$ onto $\strip$
normalized by $g^\strip_t(z) = z \pm const. + \oo(1)$ as $z \to \pm \infty$. We can reparameterize such that
$g^\strip_t(z) = z \pm t + \oo(1)$ as $z \to \pm \infty$. 
The strip Loewner equation is
\begin{equation}
\partial_t g^\strip_t(z) = \coth \left( \frac{g^\strip_t(z) - X_t}{2} \right) .
\end{equation}
We can formulate the conformal invariance and the domain Markov property for three marked points by adding a third point $c$ which
behaves the same way as $b$. 
As in the two point case we can show that the collection of probability measures $(\mu_{\Omega,a,b,c})$ has properties
CI and DMP if and only if the driving process of the random curve of $\mu_{\strip,0,\infty,-\infty}$ is 
of the form
\begin{equation*}
X_t = \sqrt{\kappa} B_t + \theta t. 
\end{equation*}
Now we don't have any conformal mappings other than the identity map preserving
$(S_\pi,0,-\infty,+\infty)$. So in general,  $\theta$ doesn't need to vanish. 
Hence the strip SLEs are a family of probability measures parameterized by two real parameters. 
See also \cite{schramm-wilson-2005-}.


The infinite strip $\strip$ can be mapped to the upper half-plane by mappings of the form $\phi: z \mapsto \alpha e^{\pm z} +\beta$
where $\alpha, \beta \in \R$ and the sign of $\alpha$ is such that $i \pi / 2$ is mapped to the upper half-plane.
Choose $\alpha$ and $\beta$ so that the marked points are mapped in the following way:
$0$ to $x \in \R$  and one of $-\infty$ or $+\infty$ to $\infty$ and the other to $y \in \R$.
The strip SLE is mapped to a random curve of the upper half-plane by defining $\widehat{K}_t = \phi( K_t)$
which is a collection of hulls of $\half$ parametrized by the ``strip capacity''.
After a time change to the upper half-plane capacity,
the half-plane mappings $g_t$ related to these hulls satisfy the half-plane Loewner equation~\eqref{eq: loewner} with
the driving process defined through the It\^o differential equation
\begin{equation}
\de X_t = \sqrt{\kappa} \de B_t + \frac{\rho \de t}{X_t - Y_t} ,
\end{equation}
where $Y_t = g_t(y)$. For details of this coordinate change and time change see \cite{schramm-wilson-2005-}.

The process $(X_t-Y_t)/\sqrt{\kappa}$ is, in fact, a Bessel process.
The parameter $\rho$ depends on $\theta$ and $\kappa$ through
\begin{equation} \label{eq: rho theta}
\rho = \pm \theta + \frac{\kappa - 6}{2} ,
\end{equation}
where the sign depends on which of the points $-\infty$ or $+\infty$ was mapped to $\infty$. The law of $K_t$ of the above driving process
is called SLE$_\kappa(\rho)$.

This description works until the stopping time
\begin{equation} \label{eq: disconnect time}
\tau = \inf \{ t \geq 0 : |Y_s - X_s| \to 0 \textrm{ as } s \nearrow t \} .
\end{equation}
For the strip SLE this is the time when the curve disconnects $-\infty$ from $+\infty$ that is the curve hits $i\pi + \R$.
After this the strip SLE can't be continued in any straightforward way.
For the upper half-plane SLE $\tau$ is the time when the curve disconnects $y$ from $\infty$ (for $\kappa >4$)
or the curve hits $y$ (for $\kappa \leq 4$). 
After time $\tau$ the upper half-plane SLE can be continued, at least for a range of values of the parameters.

SLE$_\kappa (\rho)$ are important
since they are the random curves of the upper half-plane that depend on three marked points and satisfy Schramm's
principle. 
And especially important is the case $\rho = \kappa - 6$ since that is the coordinate transformation
of chordal SLE under a M\"obius map taking the points $0$ and $\infty$ to two points $x$ and $y$ on the real line.
This can be seen from the equation~\eqref{eq: rho theta}: 
since $\rho = 0$ is the chordal SLE, $\rho = \kappa -6$ must be the coordinate change of chordal SLE. 

Since for $\kappa \in (0,8)$ the chordal SLE avoids almost surely a given point in $\overline{\half} \setminus \{0\}$, 
it avoids especially the point that is mapped to $\infty$. 
From this it follows that the image of the full trace $\gamma (0,\infty)$ under the M\"obius map
is a bounded set. Hence considering SLE$_\kappa(\kappa - 6)$ makes it possible to study the properties of the full trace
of chordal SLE$_\kappa$.

It is also easy to see from the equation~\eqref{eq: rho theta} that if the interface of an Ising type model with
$(+, -, \textrm{free})$-boundary condition has a scaling limit that is SLE$_\kappa(\rho)$
then it has to be $\theta=0$ and $\rho = (\kappa - 6)/2$. This special case is also called dipolar SLE,
see \cite{bauer-bernard-houdayer-2005-}.

\section{Stationarity and some example calculations} \label{sec: moments}

\subsection{Stationarity of SLE} \label{ssec: stationarity}

Now we are ready to present the key idea of this paper. We will take a random conformal mapping and require that its
law is invariant under SLE flow. Such a random conformal mapping is said to have \emph{stationary} law. Based on this
invariance we can derive equations satisfied by quantities related to SLE.

Let $x,y \in \R$, $x \neq y$. Consider SLE$_\kappa (\rho)$ so that $X_0 = x$ and $Y_0 = y$, $X_t$ is the driving process,
$Y_t$ is as above, and $g_t$ is the Loewner map. 
Let $\phi_t(z)=\alpha_t z + \beta_t$ be the transformation that maps the points $x$ and $y$ to the points $X_t$ and $Y_t$.
We require that
\begin{equation}
 \left\{
 \begin{aligned}
   \phi_t(x) & = X_t \\ 
   \phi_t(y) & = Y_t .
 \end{aligned}
 \right.
\end{equation}
From these equations we solve the processes $\alpha_t$ and $\beta_t$.

Consider a random conformal map $\tilde{G}:\half \setminus \tilde{K} \to \half$ that is normalized 
by $\tilde{G} (z) = z + \oo(1)$ at the
infinity, and independent from the SLE given by $X_t$ and preserved by the SLE flow in the following sense: the mapping
\begin{equation}
G_t = \phi_t \circ \tilde{G} \circ \phi_t^{-1} \circ g_t
\end{equation}
has the same law as $\tilde{G}$. This property is schematically illustrated in Figure~\ref{fig: stationarity}. 
The following theorem tells that the mapping 
$\tilde{G}$ should be thought as $\tilde{g}_{\tilde{\tau}}$ where $\tilde{g}_t$
is SLE$_\kappa (\rho)$ and independent of $g_t$, and $\tilde{\tau}$ is the stopping time
defined analogously as in the equation~\eqref{eq: disconnect time}.

\begin{figure}[htb]
\begin{center}
\includegraphics[width=.975\textwidth]{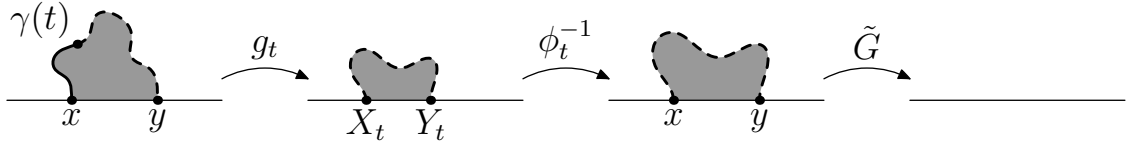}
\caption{The law of $\tilde{G}$ is stationary in the following sense: if the law of the hull in the third picture is
taken according to the law of $\tilde{G}$ and if an independent piece of SLE is added as in the first picture, then
the law of this modified hull is the same as the first one.}
\label{fig: stationarity}
\end{center}
\end{figure}

\begin{theorem} \label{thm: stationarity}
Let the pair $(g_t,\tau)$ be SLE$_\kappa(\rho)$ and the stopping time of the equation~\eqref{eq: disconnect time},
and let $(\tilde{g},\tilde{\tau})$ be an independent copy of them. 
If $\rho < (\kappa - 4)/2$ then $\tau < \infty$ a.s. and hence $g_\tau$ is well-defined.
Furthermore, if $\phi_t$ is as above,
then
$\tilde{G} = \tilde{g}_{\tilde{\tau}}$ and
\begin{equation} \label{eq: stationarity}
G_t = \begin{cases}
 \phi_t \circ \tilde{G} \circ \phi_t^{-1} \circ g_t & \textrm{ on } \{\tau > t\} \\
    g_\tau                                          & \textrm{ on } \{\tau \leq t\}
 \end{cases}
\end{equation}
are identically distributed.
\end{theorem}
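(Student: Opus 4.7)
The plan is to first establish that $\tau<\infty$ almost surely under the hypothesis $\rho<(\kappa-4)/2$, and then to deduce the distributional identity $G_t\stackrel{d}{=}\tilde G$ from the strong Markov property of SLE together with the scale-and-translation covariance that is built into the affine map $\phi_t$.

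For finiteness of $\tau$, I would form the difference $R_t=X_t-Y_t$; combining $\de X_t=\sqrt{\kappa}\,\de B_t+\rho\,\de t/(X_t-Y_t)$ with $\partial_t Y_t=2/(Y_t-X_t)$ gives $\de R_t=\sqrt{\kappa}\,\de B_t+(\rho+2)\,\de t/R_t$, so $R_t/\sqrt{\kappa}$ is a Bessel process of dimension $d=1+2(\rho+2)/\kappa$. The hypothesis $\rho<(\kappa-4)/2$ is exactly $d<2$, which is the Bessel threshold for hitting zero in finite time almost surely; hence $\tau<\infty$ a.s.\ and $g_\tau$ is well-defined.

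For the distributional identity, the case $\{\tau\le t\}$ is trivial since $G_t=g_\tau$ by definition, so it suffices to handle $\{\tau>t\}$. There, by the strong Markov property of SLE$_\kappa(\rho)$ applied at the deterministic time $t$, the shifted driving pair $(X_{t+s},Y_{t+s})_{s\ge 0}$ is, conditionally on $\mathcal{F}_t$, an SLE$_\kappa(\rho)$ driving pair started from $(X_t,Y_t)$ and independent of $\mathcal{F}_t$. I would then conjugate the continuation by $\phi_t$ and time-change, defining
\begin{equation*}
\hat F_u \;=\; \phi_t^{-1}\circ g_{t+\alpha_t^2 u}\circ g_t^{-1}\circ\phi_t,
\end{equation*}
and verify by a short Loewner--It\^o computation (using Brownian scaling $B_{\alpha_t^2 u}=\alpha_t\hat B_u$) that $\hat F_u$ solves the half-plane Loewner equation with driving $\hat X_u=\phi_t^{-1}(X_{t+\alpha_t^2 u})$, and that $(\hat X,\hat Y)$ satisfies the SLE$_\kappa(\rho)$ SDE from $(x,y)$ with a standard Brownian motion. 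The time change by $\alpha_t^2$ is forced by the half-plane capacity normalization and is precisely what preserves the Bessel SDE. The disconnection time of $\hat F$ is $\hat\tau=(\tau-t)/\alpha_t^2$, and evaluating at $\hat\tau$ gives the pathwise identity $g_\tau=\phi_t\circ\hat F_{\hat\tau}\circ\phi_t^{-1}\circ g_t$ on $\{\tau>t\}$.

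Since $\hat F_{\hat\tau}$ has the same distribution as $\tilde g_{\tilde\tau}=\tilde G$ and is independent of $\mathcal{F}_t$, one may substitute $\tilde G$ for $\hat F_{\hat\tau}$ without changing the joint law with $(g_t,\phi_t)$; this yields $G_t\stackrel{d}{=}g_\tau$ on $\{\tau>t\}$, and combined with the trivial case $\{\tau\le t\}$ the theorem follows (since $\tilde G\stackrel{d}{=}g_\tau$ tautologically). The main technical obstacle is the careful bookkeeping of the combined time change and Brownian rescaling under $\phi_t$: one has to verify that the resulting system is literally the SLE$_\kappa(\rho)$ SDE from $(x,y)$, including the rescaled drift term $\rho/(\hat X-\hat Y)$, and that the disconnection time transforms covariantly as $\hat\tau=(\tau-t)/\alpha_t^2$. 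Once this scaling calculus is pinned down, the remainder is a clean application of strong Markov plus substitution of equidistributed independent copies.
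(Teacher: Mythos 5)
Your proposal is correct and follows essentially the same route as the paper: both reduce the finiteness of $\tau$ to the Bessel process hitting zero (your dimension condition $d=1+2(\rho+2)/\kappa<2$ is the paper's index condition $\nu<1$), and both obtain the distributional identity from the Markov property combined with the affine conjugation by $\phi_t$ and the capacity time change by $\alpha_t^2$, which preserves the Bessel SDE via Brownian scaling. The only cosmetic difference is the direction of the conjugation: you map the continuation of the actual process back to the standard position $(x,y)$, whereas the paper maps the independent copy $\tilde g$ forward and concatenates.
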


\begin{proof}
The argument we present here is basically that SLE$_\kappa(\rho)$ satisfies Schramm's principle for three marked points.
Since we didn't provide the details above, it is worth writing down.

Assume that $x<y$. The other case can be done symmetrically.
Write the Bessel stochastic differential equation in a bit non-standard way as
\begin{equation} \label{eq: bessel sde}
\de Z_t = \sqrt{\kappa} \de B_t + (\rho + 2) \frac{\de t}{Z_t} .
\end{equation}
Let $Z_t$ and $\tilde{Z}_t$ be the solutions of \eqref{eq: bessel sde}
for two independent Brownian motions and with the initial condition $Z_0 = \tilde{Z}_0 = y-x$. 
Now the driving process $X_t$ is defined through the equations
\begin{align*}
Y_t & = Y_0 + \int_0^t \frac{2\de s}{Z_s} \\
X_t & = Y_t - Z_t .
\end{align*}
In the same way using $\tilde{Z}_t$ instead of $Z_t$ define $\tilde{X}_t$ and $\tilde{Y}_t$.
The stopping time $\tau$ can be written as
\begin{equation*}
\tau = \inf \{ t \geq 0 : Z_s \to 0 \textrm{ as } s \nearrow t \}
\end{equation*}
and $\tilde{\tau}$ can be written using $\tilde{Z}_t$.

The first claim follows from the fact that $Z_t$ is a scaled version of a Bessel process defined using the standard normalization, 
with the index 
\begin{equation*}
\nu = 2 \frac{\rho + 2}{\kappa} .
\end{equation*}
A standard fact is that a Bessel process will hit $0$ if and only if $\nu < 1$, see Example~6.5.3 of \cite{durrett-1996b-}.

The mapping $\phi_t \circ \tilde{g}_s \circ \phi_t^{-1}$ satisfies the normalization
\begin{equation*}
\phi_t \circ \tilde{g}_s \circ \phi_t^{-1} (z) = z + \frac{2 \alpha_t^2 s }{z} +\ldots
\end{equation*}
and the family of mappings $\hat{g}_s = \phi_t \circ \tilde{g}_{s/\alpha_t^2} \circ \phi_t^{-1}$ satisfies the Loewner
equation with the driving process 
\begin{align*}
\hat{X}_s & =\alpha_t \tilde{X}_{s/\alpha_t^2} + \beta_t
            = \alpha_t \left( \tilde{Y}_0 - \tilde{Z}_{s/\alpha_t^2} 
            + \int_0^{s/\alpha_t^2} \frac{2\de u}{\tilde{Z}_u} \right) + \beta_t \\
          & = Y_t - \alpha_t \tilde{Z}_{s/\alpha_t^2} 
            + \alpha_t \int_0^{s/\alpha_t^2} \frac{2\de u}{\tilde{Z}_u} .
\end{align*}
Since the second and third term satisfy the Brownian scaling we can write
\begin{align*}
\hat{X}_s  
 & = Y_t -  \hat{Z}_{s} + \int_0^{s} \frac{2\de u}{\hat{Z}_u} 
\end{align*}
where $\hat{Z}_{s}$ is a solution of the Bessel SDE~\eqref{eq: bessel sde} with the initial value  
$ \hat{Z}_{0} = \alpha_t Z_0 = Y_t - X_t$. 
Hence the process defined as
\begin{equation*}
\begin{cases}
Z_s & \textrm{when } s \leq t \\
\hat{Z}_{s-t} & \textrm{when } s > t
\end{cases}
\end{equation*}
is distributed as the process $Z_s$ and $\hat{g}_s \circ g_t$ is distributed as $g_{t+s}$. Let
$\sigma$ be the stopping time for $\hat{Z}_s$ hitting $0$ as $s \nearrow \sigma$. Then $\sigma = \tilde{\tau}\alpha_t^2$.
And hence on $\{\tau > t\}$ the mapping $\phi_t \circ \tilde{G} \circ \phi_t^{-1} \circ g_t$ has the same law
as $g_\tau$. On $\{\tau \leq t\}$ the statement follows immediately. 
\end{proof}

For small $t$, the event $\{\tau \leq t\}$ has exponentially small probability. To see this we need to consider
only the diffusion term ($\de B_t$) of the equation~\eqref{eq: bessel sde} and we need to note that the probability that 
a Brownian motion started from $y-x$ comes near $0$ in the time interval $[0,t]$ is exponentially small in $1/t$.
By this property we need basically just care about the first case of the equation~\eqref{eq: stationarity}.
Actually we will use the stationarity to calculate the distribution of $\tau$. See the equation~\eqref{eq: a1 distribution}
below.

Write the expansion of $g_t$ as
\begin{equation}
g_t(z) = z + \frac{a_1(t)}{z} + \frac{a_2(t)}{z^2} + \ldots
\end{equation}
We call \emph{SLE data} the collection of random variables
\begin{equation}
X_t, Y_t, a_1(t), a_2(t), \ldots
\end{equation}
SLE data carries all the information about $g_t$ and the law of $g_s$, $s>t$. The coefficient
$a_1(t)=2t$ and the higher coefficient are definite integrals of polynomials on the lower coefficients and $X_t$.
So in principle, they could be calculated. 
On the stopping time $\tau$ we have $X_t - Y_t \to 0$ as $t \nearrow \tau$ and then the SLE data simplifies
to $a_k(\tau)$, $k \in \N$. Note that also $a_1(\tau)=2\tau$ is random.

During the rest of this paper we will present some examples how to use the stationarity to calculate SLE data related quantities,
like the moments $\E[ \prod a_{k_i}(\tau) ]$.

It should be stressed, that the expected value $\E[ \prod a_{k_i}(\tau) ]$ exists only for a certain range of the parameters $\kappa,\rho$.
For example, when $\rho = \kappa -6$, for any $\kappa < 8$, $\tau < \infty$ a.s. and $g_\tau$ is well-defined, 
but $\E[ \prod |a_{k_i}(\tau)| ] < \infty$ only for $0 \leq \kappa < \kappa_0(k_1,\ldots,k_n)$ where $\kappa_0(k_1,\ldots,k_n) \to 0$ as
a natural degree of $(k_1,\ldots,k_n)$ grows.
This will be commented more in the end of Section~\ref{ssec: general moments}.

\subsection{Basic equations for the coefficients of $\tilde{G}$}

In this section we derive the equation describing the flow of $(\tilde{a}_k)$ under
the flow \eqref{eq: stationarity}.
Use the expansion
\begin{displaymath}
\tilde{G}(z)= z + \frac{\tilde{a}_1}{z} + \frac{\tilde{a}_2}{z^{2}} + \ldots
\end{displaymath}
to write the expansion of $G_t$ of the equation~\eqref{eq: stationarity}
\begin{align}
G_t(z) &= \alpha_t \tilde{G} \left( \frac{g_t(z) - \beta_t}{\alpha_t} \right) + \beta_t \nonumber \\
       &= g_t(z) + \frac{\tilde{a}_1 \alpha_t^{2}}{g_t(z) - \beta_t} + \frac{\tilde{a}_2 \alpha_t^{3}}{(g_t(z) - \beta_t)^{2}}
            + \ldots
\label{Gexpa}
\end{align}
So to get the It\^o differential of the expansion
we need to calculate It\^o differential of $g_t(z)$ and expressions of type $\alpha_t^{n+1}/(g_t(z) - \beta_t)^{n}$ at time $t=0$. 

Let's simplify the setup: let $\sigma \in \{-1,1\}$ and $x=\sigma$ and $y=-\sigma$. Note we can always transform the above setup
to this simplified setup with scaling and translation. Now
\begin{equation}
\de g_t(z) \underset{t=0}{=} \frac{2 \de t}{z - \sigma} = \left\{\frac{2}{z} + \sigma \frac{2}{z^2} + \frac{2}{z^3} 
+ \sigma \frac{2}{z^4} + \ldots \right\} \de t
\end{equation}
and after a short calculation we find that
\begin{align}
\de & \frac{ \alpha_t^{n+1} }{ (g_t(z) - \beta_t)^{n} } \underset{t=0}{=}  \left\{ \left[ (n+1) \frac{\rho+2}{4}
+ n(n+1)\frac{\kappa}{8} \right] \frac{1}{z^n} \right. \nonumber\\
& + \sigma \left[ n \frac{\rho-2}{4} + n(n+1)\frac{\kappa}{4} \right] \frac{1}{z^{n+1}}
+   \left[ -2 n + n(n+1)\frac{\kappa}{8} \right] \frac{1}{z^{n+2}} \nonumber \\
& - \sigma \left. \frac{2 n}{z^{n+3}} - \frac{2 n}{z^{n+4}} - \ldots \right\} \de t + \left\{\sigma (n+1)\frac{1}{z^{n}}
+  n\frac{1}{z^{n+1}} \right\} \de B_t .
\end{align}
Using the notation $G_t(z)= z + a_1(t) z^{-1} + a_2(t) z^{-2} + \ldots$ and combining last two It\^o differentials with (\ref{Gexpa})
we finally get
\begin{align}
\de a_n(t) \underset{t=0}{=} & \bigg\{  \frac{1}{8} (n+1)(\kappa n + 2\rho +4) \tilde{a}_n
 + \sigma \frac{1}{4} (n-1)(\kappa n + \rho -2) \tilde{a}_{n-1} \nonumber \\
& +  \frac{1}{8} (n-2)(\kappa (n-1) -16) \tilde{a}_{n-2} - \sum_{k=1}^{n-3} 2 k \sigma^{n-k} \tilde{a}_{k}
+ 2 \sigma^{n+1} \bigg\} \de t \nonumber \\
& + \frac{\sqrt{\kappa}}{2} \bigg\{ \sigma(n+1) \tilde{a}_n + (n-1) \tilde{a}_{n-1} \bigg\} \de B_t .
\end{align}
From now on we will not distinguish between $\tilde{a}_n$ and $a_n$. Write in short
\begin{equation}
\de a_n = \left(c_{n,0} + \sum_{k=1}^n c_{n,k} a_k \right) \de t + \left(d_{n,n-1} a_{n-1} + d_{n,n} a_n 
\right) \de B_t . \label{eq: cdnot}
\end{equation}
These expressions are linear in variables $(a_k)$ and hierarchical in the sense that the It\^o differential of 
$a_n$ involves only terms $a_k$ for $k \leq n$. This is really the reason why this method is useful.

\subsection{Stationarity for the inverse mapping}

Similar argument can be made for the inverse mapping $\tilde{F}=\tilde{G}^{-1}:\half \to \half \setminus \tilde{K}$.
For the inverse mapping $f_t$ of $g_t$ the Loewner equation is
\begin{equation}
\partial_t f_t(z)=- f_t'(z) \frac{2}{z - X_t} .
\end{equation}

Let $\tilde{F}$ be a random conformal mapping that is preserved by SLE flow of $f_t$ in the following sense:
the mapping
\begin{equation}
F_t = f_t \circ \phi_t \circ \tilde{F} \circ \phi_t^{-1}
\end{equation}
has the same law as $\tilde{F}$.

Now $F_t(z)=f_t(\alpha_t\tilde{F}((z-\beta_t)/\alpha_t)+\beta_t)$ and therefore
\begin{equation}
\de F_t(z) \underset{t=0}{=} -\frac{2 \de t}{\tilde{F}(z) - \sigma} + \de \left( 
\alpha_t \tilde{F}\left( \frac{z-\beta_t}{\alpha_t} \right) + \beta_t \right) .
\end{equation}
If $\tilde{F}(z) = z + \tilde{b}_1 z^{-1} + \tilde{b}_2 z^{-2} + \ldots$ and $F_t(z) = z + b_1(t) z^{-1} 
+ b_2(t) z^{-2} + \ldots$, we get expression for $\de b_n(t=0)$ in terms of $\tilde{b}_m$ similarly as in the case of
$\tilde{G}$. But now the expressions are not linear in $\tilde{b}_m$. For this reason we won't consider this setup.

\subsection{The reversibility of SLE with moments} \label{ssec: reversibility}

The reversibility of SLE is the following property: let $\gamma$ be chordal SLE from $0$ to $\infty$. Then $\gamma$ and
$-1/\gamma$ appropriately parameterized have the same law. In terms of SLE$_\kappa(\kappa - 6)$ this can be stated as 
SLE$_\kappa(\kappa - 6)$ from $x$ to $y$ and 
SLE$_\kappa(\kappa - 6)$ from $y$ to $x$ appropriately parameterized have the same law.
Especially this means that the hulls of the full traces have to have the same law.

Consider now $x = -1$ and $y = 1$. Start SLE$_\kappa(\kappa -6)$ from $x$ and denote by $\tau_{-}$ the hitting time
of $y$ and let the conformal map be $g_{\tau_{-}}^{-}(z)= z + a_1^{-} z^{-1} + a_2^{-} z^{-2} + \ldots $.
In the same way start SLE$_\kappa(\kappa -6)$ from $y$ and denote by $\tau_{+}$ the hitting time
of $x$ and let the conformal map be $g_{\tau_{+}}^{+}(z)= z + a_1^{+} z^{-1} + a_2^{+} z^{-2} + \ldots $.
The reversibility can be formulated using the coefficient $a_n^{\pm}$: for any $n \in \N$ and 
$l_1,\ldots,l_n \in \N$, $l_1 < l_2 < \ldots < l_n$
\begin{displaymath}
(a^{-}_{l_1},a^{-}_{l_2},\ldots,a^{-}_{l_n}) \overset{\mathcal{L}}{=} (a^{+}_{l_1},a^{+}_{l_2},\ldots,a^{+}_{l_n}) .
\end{displaymath}
i.e. they have the same law.

Let $m(z)=-\overline{z}$. This map is the mirror map that changes $x$ with $y$ and therefore $m \circ g_{\tau_{-}}^{-} \circ
m \overset{\mathcal{L}}{=} g_{\tau_{+}}^{+}$. On the other hand for any $g(z) = z + a_1 z^{-1} + a_2 z^{-2} + \ldots$
with real $a_m$, $m \in \N$, we have
\begin{align*}
m \circ g \circ m (z) & = m(g(-\overline{z})) = m\left(-\overline{z} - \frac{a_1}{\overline{z}} + \frac{a_2}{\overline{z}^2}
 - \frac{a_3}{\overline{z}^3} + \ldots \right) \\
& = z + \frac{a_1}{z} - \frac{a_2}{z^2} + \frac{a_3}{z^3} + \ldots 
\end{align*}
In words, the even coefficients change sign under the mirror map $m$. This shows that the reversibility is equivalent
to
\begin{displaymath}
(a^{-}_{l_1},a^{-}_{l_2},\ldots,a^{-}_{l_n}) \overset{\mathcal{L}}{=} ((-1)^{l_1+1}a^{-}_{l_1},
(-1)^{l_2+1}a^{-}_{l_2},\ldots,(-1)^{l_n+1}a^{-}_{l_n}) 
\end{displaymath}
which a nice way to give a concrete formulation for the reversibility.

Let $n \in \N$ and $(k_1,\ldots,k_n) \in \{0,1,2,\ldots\}^n$. If the reversibility holds then
\begin{equation} \label{eq: moments reversibility}
\E \left[ \prod_{j=1}^n (a_j^{-})^{k_j} \right] = (-1)^{\sum_{1\leq i \leq n/2} k_{2 i} } 
\E \left[ \prod_{j=1}^n (a_j^{-})^{k_j} \right]
\end{equation}
which should vanish when $\sum_{1\leq i \leq n/2} k_{2 i}$ is odd. In fact, if every moment existed, 
one strategy in proving the reversibility, at least in the case $\kappa \in (0,4]$, could be
showing that these odd moments vanish and showing that the moments determine the distribution.

\subsection{General expression for moments} \label{ssec: general moments}

To work out equations for expected values of the type in the equation~\eqref{eq: moments reversibility} we use the following notation:
fix $n \in \N$ and $(k_1,\ldots,k_n) \in \{0,1,2,\ldots\}^n$ and let
\begin{displaymath}
\Pi = \Pi(k_1,k_2,\ldots,k_n)=a_1^{k_1} a_2^{k_2} \cdot \ldots \cdot a_n^{k_n}
\end{displaymath}
and for $i \in \{1,\ldots, n\}$
\begin{align*}
\Pi^i (k_1,k_2,\ldots,k_n) & = \Pi(k_1,\ldots, k_{i-1}, k_i+1, k_{i+1},\ldots,k_n) \\
\Pi_i (k_1,k_2,\ldots,k_n) & = \Pi(k_1,\ldots, k_{i-1}, k_i-1, k_{i+1},\ldots,k_n) .
\end{align*}
Here $\Pi=0$ with negative arguments. Define similarly $\Pi_{i_1,\ldots,i_l}^{j_1,\ldots,j_m}$. Further $\Pi^0 = \Pi$.
Since we are looking for the stationary $\tilde{G}$ we require that the expectation of the drift of $\Pi$ vanishes.
So for a while we will manipulate the expression of $\de \Pi$.

Using this notation and the notation of equation (\ref{eq: cdnot}) we find that
\begin{align}
\de \Pi & = \sum_i \Pi_i \de a_i + \frac{1}{2} \sum_{i,j} k_i ( k_j - \delta_{ij} ) \Pi_{i,j}
   \de a_i \de a_j \nonumber \\
 & = \ldots = \Bigg\{ \Bigg[ \frac{1}{2} \sum_i k_i d_{ii} \Big( 2 \frac{c_{ii}}{d_{ii}} - d_{ii} + \sum_j k_j d_{jj}
   \Big) \Bigg] \Pi \nonumber \\
 & + \sum_{i>1} k_i d_{i,i-1} \Big( \frac{c_{i,i-1}}{d_{i,i-1}} - d_{ii} + \sum_j k_j d_{jj} \Big) 
   \Pi_i^{i-1} \nonumber \\
 & + \frac{1}{2} \sum_{i,j>1} k_i (k_j -\delta_{ij}) d_{i,i-1} d_{j,j-1} \Pi_{i,j}^{i-1,j-1} \nonumber \\
 & + \sum_{i} k_i \sum_{l=0}^{i-2} c_{i l} \Pi_{i}^{l} \Bigg\} \de t + \big\{ \quad \big\} \de B_t .
\end{align}
Note that the following expressions are independent of the summation index $i$ for any $\kappa$ and $\rho$
\begin{align*}
2 \frac{c_{ii}}{d_{ii}} - d_{ii} & = \frac{2 \rho + 4 - \kappa}{2 \sqrt{\kappa}} \\
\frac{c_{i,i-1}}{d_{i,i-1}} - d_{ii} & = \frac{\rho -2 - \kappa}{2 \sqrt{\kappa}} .
\end{align*}
Next we write that $\sum_i k_i d_{ii} = \sqrt{\kappa} /2 \sum_i k_i (i+1) = \sqrt{\kappa} N$, which defines the
degree 
\begin{equation}
N = \frac{1}{2} \sum_i k_i (i+1)
\end{equation}
of a moment $\Pi$.
Plugging this and the values of $c_{ij}$ and $d_{ij}$ we get that
\begin{align}
\de \Pi & = \Bigg\{ \frac{1}{4} N [ (2 \rho + 4 - \kappa) + 2 N \kappa ] \Pi  
 + \sigma \frac{1}{4} [(\rho-2-\kappa) + 2N\kappa] \sum_{i} k_i (i-1) \Pi_i^{i-1} \nonumber \\
 & + \frac{\kappa}{8} \sum_{i,j} k_i (k_j -\delta_{ij}) (i-1)(j-1) \Pi_{i,j}^{i-1,j-1} \nonumber \\
 & - 2 \sum_{i} k_i \sum_{l=1}^{i-2} \sigma^{i-l} l \; \Pi_{i}^{l} 
   + 2 \sum_{i} k_i \sigma^{i+1} \Pi_{i}^{0}
\Bigg\} \de t + \big\{ \quad \big\} \de B_t . \label{eq: dePikr}
\end{align}
For $\rho = \kappa -6$ the above brackets are $2 \rho + 4 - \kappa + 2 N \kappa = (2N+1)\kappa -8$ and
$\rho - 2 - \kappa + 2N\kappa = 2N\kappa - 8$. Let's use this value of $\rho$ for a while.

Now we analyze the degree N. First of all
\begin{align*}
N & = \frac{1}{2} \sum_i k_{2 i -1} \cdot 2i + \frac{1}{2} \sum_i k_{2 i} \cdot (2i+1) \\
 & = \sum_i (k_{2 i -1} + k_{2i} )i + \frac{1}{2} \sum_i k_{2 i} .
\end{align*}
So $N$ is either a half-integer or an integer depending whether $\sum_i k_{2i}$ is odd or even. So for the reversibility 
we would like to show that $\E [\Pi ]=0$ when $N$ is a half-integer. Next we
note that the drift in the equation (\ref{eq: dePikr}) decomposes into $A + \sigma B$ where $A$ and $B$
don't depend (directly) on $\sigma$ and all the half-integer moments are put in the other one and the
integer moments on the other.

Under the reversibility $\E [\Pi]=0$ when $N$ is a half-integer, then for $N$ an integer we would have
\begin{equation} \label{eq: moments general form}
\E [ \Pi(k_1,\ldots,k_n) ] = \frac{ p_{k_1,\ldots,k_n} (\kappa) }{%
(8 - 3 \kappa)(8 - 5 \kappa)\cdot \ldots \cdot (8-(2N+1)\kappa)} ,
\end{equation}
where $p_{k_1,\ldots,k_n}$ is a polynomial with highest degree $\tilde{N} = 1/2 \sum_i k_i (i-1)$. Denominator
follows from the fact that as we recursively solve $\E[\Pi]$ from (\ref{eq: dePikr}) by demanding that the
drift vanishes, the factor in front of the moment with the largest degree is $1/4 N [(2N+1)\kappa -8]$. Similarly
$\kappa$ can enter numerator only through the term $\Pi_{i,j}^{i-1,j-1}$ (this argument requires more care though).
$\tilde{N}$ is the number of steps from $\Pi(k_1,\ldots,k_n)$ to $\Pi(k_1',0,0,\ldots,0)$ by lowering two powers
with $\Pi_{i,j}^{i-1,j-1}$.

The equation~\eqref{eq: moments general form} can be interpreted so that the expected value $\Pi(k_1,\ldots,k_n)$
exists for small $\kappa$ as long as the right-hand side is finite. So we can read from this general form
that the expected value $\Pi(k_1,\ldots,k_n)$ exists for $\kappa \in \big(0,8/(2N+1)\big)$.
This result is proven in Appendix~A.1 of \cite{kytola-2006-}. The result therein includes both cases the half-integer
and the integer moments.

\subsection{Calculating moments $a_1^n$, $a_1^n a_2^m$ and so on} \label{ssec: a1 a2 moments}

In this section, we study only the case $\rho = \kappa -6$. We will show how to actually calculate moments, i.e. expected
values of SLE data.
Let's calculate It\^o differential
\begin{align*}
\de(a_1^n) &= n a_1^{n-1} \de a_1 + \frac{1}{2} n(n-1) a_1^{n-2} (\de a_1)^2 \\
 &= \left[ n a_1^{n-1} \left(2 + \frac{3\kappa-8}{4} a_1\right) + \frac{1}{2} n(n-1) a_1^{n-2} \cdot \kappa a_1^2
    \right] \de t + ( \quad ) \de B_t \\
 &= n \left[ 2 a_1^{n-1} + \frac{(2n+1)\kappa-8}{4} a_1^{n} \right] \de t + ( \quad ) \de B_t .
\end{align*}
Then we demand that expectation of the drift is zero. This gives
\begin{displaymath}
\E [a_1^n] = \frac{8 \E[a_1^{n-1}]}{8-(2n+1)\kappa} = \ldots 
= \frac{8^{n}}{(8-3\kappa)(8-5\kappa)\cdot\ldots\cdot(8-(2n+1)\kappa)}
\end{displaymath}
since $\E[a_1^{0}]=1$. This is true for $x=\sigma$ and $y=-\sigma$. For general $x,y \in \R$, use a suitable M\"obius
transformation to get
\begin{equation} \label{eq: a1 formula}
\E [a_1^n] 
= \frac{2^{n} (x-y)^{2n}}{(8-3\kappa)(8-5\kappa)\cdot\ldots\cdot(8-(2n+1)\kappa)} .
\end{equation}

Similar calculation for $a_1^{n} a_2^{m}$, $m$ even, gives
\begin{equation}
\E [a_1^n a_2^m] \label{eq: a1a2 formula}
= \frac{2^{2n+3m} \left(\frac{\kappa}{6}\right)^{m/2} \frac{m!}{(\frac{m}{2})!}}{
(8-3\kappa)(8-5\kappa)\cdot\ldots\cdot(8-(2n+3m+1)\kappa)} .
\end{equation}
The higher moments can be in principle calculated using the recursion we get from the equation~\eqref{eq: dePikr}.
The author hasn't been able to completely solve the recursion.

\subsection{Density function of $a_1$} \label{ssec: a1 density}

As stated earlier $a_1$ is distributed as $2 \tau$ where $\tau$ is the hitting time of $0$ for a Bessel process.
Its distribution could be calculated using a martingale trick or similarly as below but using just the Bessel process. 
However the following way to calculate the distribution is worth mentioning.

If the capacity $a_1(t)$ has a density function $\nu_t$ then
\begin{displaymath}
\E_t [ f(a_1) ] = \int_0^\infty f(x) \nu_t (x) \de x
\end{displaymath}
for each sufficiently smooth $f:(0,\infty) \to \R$ with compact support. For such function It\^o differential
is
\begin{displaymath}
\de f(a_1) = \left[ \left(2 + \frac{\kappa+2\rho+4}{4} a_1 \right) f'(a_1) + \frac{\kappa}{2} a_1^2 f''(a_1) \right] \de t
 + \sigma a_1 \sqrt{\kappa} f'(a_1) \de B_t .
\end{displaymath}
For $\nu_t = \nu$ stationary, the expectation of the drift has to vanish
\begin{align}
0 & = \E \left[ \left(2 + \frac{\kappa+2\rho+4}{4} a_1 \right) f'(a_1) + \frac{\kappa}{2} a_1^2 f''(a_1) \right] \nonumber \\
  & = \int_0^\infty [p(x) f''(x) + q(x) f'(x) ]\nu (x) \de x 
  \label{eq: cap_df}
\end{align}
where $p(x)=\kappa/2 x^2$ and $q(x)=2 + (\kappa+2\rho+4)/4 x$.
Since equation (\ref{eq: cap_df}) holds for every $f$ smooth and with compact support, we conclude
$-(p(x) \nu(x))' + q(x) \nu (x) = C = \textrm{ const.}$ If we assume $\nu$ and $\nu'$ go zero as $x \to 0$, then $C=0$.

Now we solve
\begin{displaymath}
\frac{\nu'(x)}{\nu(x)} = \frac{q(x)-p'(x)}{p(x)} = -\frac{3 \kappa - 2 \rho - 4}{2 \kappa} \frac{1}{x}
+ \frac{4}{\kappa}\frac{1}{x^2}
\end{displaymath}
giving
\begin{equation} \label{eq: a1 distribution}
\nu(x) = C_{\kappa,\rho} x^{-\frac{3 \kappa - 2 \rho - 4}{2 \kappa}} e^{-\frac{4}{\kappa}\frac{1}{x}} .
\end{equation}
Coefficient $C_{\kappa,\rho}$ is determined from $\int_0^\infty \nu (x) \de x=1$, where the integral converges
if and only if the power of $x$ is smaller than $-1$. For $\rho = \kappa -6$ this means $\kappa < 8$. 
This result can be explained as follows: for $\kappa<8$ the chordal SLE a.s. avoids given point and hence the capacity
seen from this point is a.s. finite.

\section{Conclusions}

It was shown how to formulate the stationarity of SLE$_\kappa(\rho)$ as stationarity of the law of a stopped hull
under a SLE induced flow. 
One of the advances of this approach is that it involves the full SLE trace directly. The full trace is the most interesting object
from the statistical physics point of view.

When using the approach to calculate the moments $\E[ \prod a_{k_j} ]$, the problem is that these expected values only exist
for a range of the parameter $\kappa$. Hence the approach should be applied in some different way. 
For example, some other function of the random variables $a_1,a_2,\ldots$ could be taken, say, 
such as $\E[ e^{i \lambda a_1 } \prod a_{k_j} ]$.
As proposed by Stanislav Smirnov,
one option is to try to find an alternative interpretation beyond the blowup for the analytic continuations of 
the moment formulas such as \eqref{eq: a1 formula} and \eqref{eq: a1a2 formula}.

\section*{Acknowledgments}

I wish to thank Stanislav Smirnov. This work was started during a joint project. I wish also thank Kalle Kyt\"ol\"a and
Antti Kupiainen for useful discussions. Jan Cristina also deserves thanks for reading a part of the paper
and for the discussions on writing in English.
This work was financially supported by Academy of Finland and
by Finnish Academy of Science and Letters, Vilho, Yrj\"o and Kalle V\"ais\"al\"a Foundation.


\begin{thebibliography}{10}

\bibitem{bauer-bernard-houdayer-2005-}
M.~Bauer, D.~Bernard, and J.~Houdayer.
\newblock Dipolar stochastic {L}oewner evolutions.
\newblock {\em J. Stat. Mech. Theory Exp.}, 2005(03):P03001, 2005,
  arXiv:math-ph/0411038v1.

\bibitem{durrett-1996b-}
R.~Durrett.
\newblock {\em Stochastic calculus}.
\newblock Probability and Stochastics Series. CRC Press, Boca Raton, FL, 1996.

\bibitem{kytola-2006-}
K.~Kyt{\"o}l{\"a}.
\newblock {\em Conformal Field Theory Methods for Variants of Schramm-Loewner
  Evolutions}.
\newblock Doctoral dissertation, University of Helsinki, Faculty of Science,
  Department of Mathematics and Statistics, October 2006.

\bibitem{kytola-kemppainen-2006-}
K.~Kyt{\"o}l{\"a} and A.~Kemppainen.
\newblock S{LE} local martingales, reversibility and duality.
\newblock {\em J. Phys. A}, 39(46):L657--L666, 2006, arXiv:math-ph/0605058v3.

\bibitem{lawler-schramm-werner-2001-}
G.~F. Lawler, O.~Schramm, and W.~Werner.
\newblock Values of {B}rownian intersection exponents. {I}. {H}alf-plane
  exponents.
\newblock {\em Acta Math.}, 187(2):237--273, 2001, arXiv:math/9911084v2
  [math.PR].

\bibitem{rohde-schramm-2005-}
S.~Rohde and O.~Schramm.
\newblock Basic properties of {SLE}.
\newblock {\em Ann. of Math. (2)}, 161(2):883--924, 2005, arXiv:math/0106036v4
  [math.PR].

\bibitem{schramm-2000-}
O.~Schramm.
\newblock Scaling limits of loop-erased random walks and uniform spanning
  trees.
\newblock {\em Israel J. Math.}, 118:221--288, 2000, arXiv:math/9904022v2
  [math.PR].

\bibitem{schramm-wilson-2005-}
O.~Schramm and D.~B. Wilson.
\newblock S{LE} coordinate changes.
\newblock {\em New York J. Math.}, 11:659--669 (electronic), 2005,
  arXiv:math/0505368v3 [math.PR].

\bibitem{smirnov-2006-}
S.~Smirnov.
\newblock Towards conformal invariance of 2{D} lattice models.
\newblock In {\em International {C}ongress of {M}athematicians. {V}ol. {II}},
  pages 1421--1451. Eur. Math. Soc., Z\"urich, 2006, arXiv:0708.0032v1
  [math-ph].

\bibitem{zhan-2008-}
D.~Zhan.
\newblock Reversibility of chordal {SLE}.
\newblock {\em Ann. Probab.}, 36(4):1472--1494, 2008, arXiv:0808.3649v1
  [math.PR].

\end{thebibliography}
\end{document}